\newif\ifcomments
\definecolor{olive}{rgb}{0.3, 0.4, .1}
\definecolor{pinegreen}{cmyk}{0.92,0,0.59,0.25}
\theoremstyle{plain}
\newtheorem{thm}{Theorem}[section]
\newtheorem{lem}[thm]{Lemma}
\newtheorem{cor}[thm]{Corollary}
\theoremstyle{definition}
\newtheorem{defn}{Definition}[section]
\newcommand{\scheduler}{scheduler\xspace}
\begin{document}
\newcommand{\arcomm}[1]{\todo[color=green,bordercolor=black,linecolor=black]{\textsf{\scriptsize\linespread{1}\selectfont ARP: #1}}}
\newcommand{\arcommin}[1]{\todo[inline,color=green,bordercolor=green,linecolor=green]{\textsf{ARP: #1}}}
\newcommand{\boxedtext}[1]{\fbox{\scriptsize\bfseries\textsf{#1}}}

\newcommand{\greenremark}[2]{
   \textcolor{pinegreen}{\boxedtext{#1}
      {\small$\blacktriangleright$\emph{\textsl{#2}}$\blacktriangleleft$}
    }}

  \definecolor{burntorange}{rgb}{0.8, 0.33, 0.0}
  \definecolor{blue}{rgb}{0.0, 0.0, 0.5}
  \newcommand{\changeremark}[2]{
   \textcolor{burntorange}{\boxedtext{#1}
      {\small$\blacktriangleright$\emph{\textsl{#2}}$\blacktriangleleft$}
}}
\newcommand{\myremark}[2]{
      {  \lowercase{\color{blue}}\boxedtext{#1}
      {\small$\blacktriangleright$\emph{\textsl{#2}}$\blacktriangleleft$}
    }}
  \newcommand{\myremarknew}[2]{
   \textcolor{violet}{\boxedtext{#1}
      {\small$\blacktriangleright$\emph{\textsl{#2}}$\blacktriangleleft$}
}}
\newcommand{\NewRemark}[2]{
   \textcolor{violet}{
      {\small$\blacktriangleright$\emph{\textsl{#2}}$\blacktriangleleft$}
}}

\newcommand{\redremark}[2]{
   \textcolor{red}{\boxedtext{#1}
      {\small$\blacktriangleright$\emph{\textsl{#2}}$\blacktriangleleft$}
    }}
  
\ifcomments
  \newcommand\ARP[1]{\myremark{ARP}{#1}}
  \newcommand\ARPN[1]{\myremarknew{A}{#1}}
  \newcommand\vincent[1]{{\color{red}{VG: #1}}}
  \newcommand{\warning}[1]{\redremark{\fontencoding{U}\fontfamily{futs}\selectfont\char 66\relax}{#1}}
  \newcommand\NEW[1]{\NewRemark{NEW}{#1}}
  \newcommand\TODO[1]{\greenremark{TODO}{#1}}
  \newcommand\CHANGE[1]{\changeremark{CHANGE(?)}{#1}}
\else
  \newcommand\ARP[1]{}
  \newcommand\ARPN[1]{}
  \newcommand\vincent[1]{}
  \newcommand{\warning}[1]{}
  \newcommand\NEW[1]{#1}
  \newcommand\TODO[1]{}
  \newcommand\CHANGE[1]{}
\fi
\date{}

\title{Rational Agreement in the Presence of Crash Faults} 

\author{{\rm Alejandro Ranchal-Pedrosa}\\
  \small{University of Sydney} \\
  \small{Sydney, Australia} \\
  \small{alejandro.ranchalpedrosa@sydney.edu.au}
  \and
  {\rm Vincent Gramoli}\\
  \small{University of Sydney and EPFL}\\
  \small{Sydney, Australia} \\
  \small{vincent.gramoli@sydney.edu.au}}
\maketitle

\begin{abstract}
Blockchain systems need to solve consensus despite the presence of rational users and failures.
The notion of $(k,t)$-robustness has shown instrumental to list problems that cannot be solved if $k$ players are rational and $t$ players are Byzantine or act arbitrarily.
What is less clear is whether one can solve such problems if the faults are benign. 

In this paper, we bridge the gap between games that are robust against Byzantine players and games that are robust against crash players. 
Our first result is an impossibility result:
We show that no $(k,t)$-robust consensus protocol can solve consensus in the crash model if $k+2t\geq n$ unless there is a particular punishment strategy, called the $(k,t)$-baiting strategy.
This reveals the need to introduce \emph{baiting} as the act of rewarding a colluding node when betraying its coalition, to make blockchains more secure.

Our second result is an
equivalence relation between crash fault tolerant games and Byzantine fault tolerant games, which raises an interesting research question on the power of baiting to solve consensus.
To this end, we show, on the one hand, that a $(k,t)$-robust consensus protocol becomes $(k+t,t)$-robust in the crash model.
We show, on the other hand, that the existence of a $(k,t)$-robust consensus protocol in the crash model that does not make use of a baiting strategy implies the existence of a $(k-t,t)$-robust consensus protocol in the Byzantine model, with the help of cryptography.
\end{abstract}

\section{Introduction}
\label{sec:intro}

With the advent of blockchains, there is a growing interest at the frontier between distributed 
computing and game theory. As one fundamental building block of blockchains is consensus, it 
is natural to seek equilibria in which consensus is reached despite the presence of both failures and rational players. 
Moreover, as blockchains handle valuable assets over the Internet, they are typically subject to network attacks~\cite{eclipse-attack,balance-attack,attack-of-the-clone} and should tolerate unexpected delays---an assumption called partial synchrony~\cite{DLS88}---to avoid asset losses.

There are traditionally two types of failures considered in the distributed computing literature: \emph{crash failures} after which a participant stops and \emph{Byzantine failures} when participants act arbitrarily (i.e., irrationally).
This is why considering fault tolerant distributed protocols as games requires to 
cope with a mixture of up to $k$ rational players and $t$ faulty players. 
The idea of mixing rational players with faulty players has already been extensively explored
in the context of secret sharing and multi-party computation~\cite{lysyanskaya2006rationality,FKN10,DMRS11,ADGH} but rely usually on 
a trusted central authority, called a \emph{mediator}.

Recent results~\cite{abraham2019implementing} showed that mediators could 
be implemented 
in a fully distributed setting when $n>3(k+t)$ and $t$ players are Byzantine.
Unfortunately, this adaptation makes it impossible to devise even a consensus solution 
that is immune to a single $t=1$ Byzantine failure as it is impossible to solve consensus 
with complete asynchrony and failures~\cite{fischer1985impossibility}.
More recent results~\cite{geffner2021lower} seem to indicate that when $n\leq 4(k+t)$ there exist equilibria that 
cannot be implemented in a distributed fashion when the player behaviors are irrational or failures are Byzantine.

These impossibility results raise the question of whether one can solve consensus in a distributed fashion with $k$ rationals and $t$ failures when the failures are crashes and communication is partially synchronous~\cite{DLS88}. 
We believe this combination to be particularly relevant in the blockchain context as players are 
incentivized to steal assets by leading other players to a disagreement---also called a \emph{fork}. Such  a  situation went undetected in Bitcoin and led the attackers to ``double spend'', effectively doubling their assets.\footnote{\url{https://www.cnet.com/news/hacker-swipes-83000-from-bitcoin-mining-pools/}.}

\paragraph*{\textbf{Our result}}
In this paper, we focus instead on the partially synchronous model,
where the bound on the delay of messages is unknown~\cite{DLS88}, to
design a protocol that solves consensus among $n$ players, where up to
$t$ players can crash and $k$ are rational players that can collude
and deviate, in what we refer to as a $(k,t)$-crash-robust protocol.
To this end, we first define $(k,t)$-crash-robustness by extending
$(k,t)$-robustness~\cite{ADGH} but replacing $t$ Byzantine players by
$t$ crash players, and we consider that rational players prefer to
form a coalition and cause a disagreement than to satisfy agreement.

To the best of our knowledge, we present the first work that obtains
bounds for the robustness of agreement against coalitions of crash and
rational players in partial synchrony in a setting where rational
players prefer to disagree. This result establishes constructively a direct
relation between $(k,t)$-robust protocols and $(k',t')$-crash-robust
protocols.

More precisely, we first prove 
that no resilient-optimal crash-fault tolerant protocol can tolerate even one rational player. Specifically, we prove that it is not possible, in general, to design a protocol that implements consensus and is $(k,t)$-crash-robust for $k+2t\geq n$, unless there is a $(k,t)$-crash-baiting strategy (that is, a punishment strategy that strictly dominates deviations towards a disagreement for a number of deviating rational players) with respect to the protocol. This means that state-of-the-art crash-fault tolerant (CFT) protocols that tolerate up to less than $n/2$ crash players~\cite{DLS88} do not tolerate even one rational player in partial synchrony. 

\emph{This raises a new research question: what would be a protocol that achieves a good compromise between crash-fault tolerance and tolerance to rational players in partial synchrony?} 

To answer this question, and since the literature 
tackled the problem for Byzantine faults, we demonstrate a link between Byzantine players with rational and crash players: a protocol that tolerates $s$ Byzantine players also tolerates $s$ rational players and $s$ crash players. We complete this relation by proving that a protocol that implements consensus and is $\epsilon$-$(k,t)$-robust (where $\epsilon$ accounts for the small probability of the adversary breaking the cryptography) is also $\epsilon$-$(k+t,t)$-crash-robust. Additionally, we define a $(t',t)$-immune protocol as a protocol that tolerates up to $t'$ crash faults and up to $t$ Byzantine faults, and prove that if a protocol is $\epsilon$-$(t',t)$-immune, then it is also $\epsilon$-$(t,t'+t)$-crash-robust.

Finally, we also establish a relation in the opposite direction assuming cryptography and ignoring crash-baiting strategies: a protocol that tolerates up to $k$ rational players and up to $t$ crash players without a $(k,t)$-crash-baiting strategy also tolerates up to $\min(k,t)$ Byzantine players, that is, a $(k,t)$-crash-robust protocol is also $\min(k,t)$-immune. We then prove the relation relative to robustness: if there is a protocol $\vv{\sigma}$ that is $\epsilon$-$(k,t)$-crash-robust with $k\geq t$ then we can construct an $\epsilon$-$(k-t,t)$-robust protocol $\vv{\sigma'}$, assuming cryptography and that $\vv{\sigma}$ does not implement a $(k,t)$-crash-baiting strategy, or instead that it implements an effective $(k,t)$-crash-baiting strategy that is also an effective $(k-t,t)$-baiting strategy, where effective means that playing the baiting strategy still implements consensus. We apply the analogous if $t\geq k$ to obtain that $\epsilon$-$(k,t)$-crash-robust protocols are also $\epsilon$-$(t-k,k)$-immune, excluding baiting strategies. Finally, we discuss in detail the implications of baiting and punishment strategies to this relation.

\subsection{Related work} \label{sec:relwork}

Multiple research groups studied the consensus problem by 
combining rational players with crash 
faults~\cite{clementi2017,HV20,bei2012distributed} or by
replacing
crash faults by rational
players~\cite{groce2012byzantine}. Groce et
al.~\cite{groce2012byzantine} show a protocol that solves Byzantine
agreement in synchrony even in the presence of rational coalitions as
long as the size $k$ of the coalition is such that $k<n$ and given
complete knowledge of the adversary's preferences. Nevertheless, they
consider neither crash nor Byzantine (irrational) faults, a model also used
by Ebrahimi et al.~\cite{ebrahimi2019getting}. 

Bei et
al.~\cite{bei2012distributed} extend this synchrony result to add
crash failures, obtaining that if colluding players are allowed an
extra communication round after each synchronous round, then there is
no protocol that can tolerate even 2 colluding rational players and 1
crash player under their model. Clementi et al.~\cite{clementi2017} study the problem
of fair consensus in the presence of rational and crash faults in the
synchronous gossip communication model. Fair consensus adds a new
property, \emph{fairness}, defined by all players sharing the same
probability of their proposal being decided. The gossip communication
model allows all agents to contact at most one neighbor via a
push/pull operation at every round. 

Harel et
al.~\cite{harel2020consensus} assume a set of utilities for rational
players such that they guarantee solution preference, meaning that all
rational players want to satisfy all properties of consensus. Finally,
Halpern et al.~\cite{HV20} extend the results on fair consensus in the
synchronous model. None of the results listed so far consider either the
partially synchronous model or rational players that are
interested in disagreeing in blockchains to maximize their profit.

Secret sharing and multi-party computation already explored a combined
group of rational players with faulty
players~\cite{lysyanskaya2006rationality,FKN10,DMRS11,ADGH}, specially
focused on implementing trusted mediators through
cheap-talk~\cite{ADGH}, i.e., private pairwise communication channels
of negligible cost. In particular, Abraham et al.~\cite{ADGH} showed in
2006 that there are $\epsilon$-$(k,t)$-robust protocols that implement
mediators with synchronous cheap-talk if $n>k+2t$, to later extend it
in 2019 to $\epsilon$-$(k,t)$-robust protocols that implement
mediators with asynchronous cheap-talk for $n>3(k+t)$. 

Since it is well known that it is impossible to implement a $1$-immune
protocol that solves consensus in the asynchronous
model~\cite{LSP82,fischer1985impossibility}, 
Ranchal-Pedrosa et al.~\cite{ranchal2021huntsman} devised the Huntsman
protocol, a protocol that is $\epsilon$-$(k,t)$-robust and implements
the consensus problem with $n>\max(\frac{3}{2}k+3t,2(k+t))$, which is
the highest robustness to date in the presence of Byzantine players and
coalitions of rational players that may be interested in causing
disagreement. Our result implies that the Huntsman protocol is
$\epsilon$-$(k+t,t)$-crash-robust for
$n>\max(\frac{3}{2}k+3t,2(k+t))$.

Other works have also explored the other properties of the
consensus problem (besides the agreement property). Added to the works that focused on fairness, Vila\c{c}a et
al.~\cite{Vilaca2012,vilacca2011n} and Amoussou-guenou et
al.~\cite{Amoussou-guenou2020} focus on the properties of termination and
validity in a model where communication and local computation incur a
non-negligible cost for players, and disregarding potential interests of
rational players in causing a disagreement. Abraham et al.~\cite{ADH19} explore the problem of
leader election by redefining fairness in the sense of a randomized
dictatorship in which all players have equal probability of being
elected. This protocol can also be used to solve fair consensus.

To the best of our knowledge, we present the first work that obtains
bounds for the robustness of agreement against coalitions of crash and
rational players in partial synchrony, and that establishes
constructively direct relations between $(k,t)$-robust protocols and
$(k',t')$-crash-robust protocols.
\subsection{Roadmap}
The rest of the paper is structured as follows: Section~\ref{sec:model} presents our model and definitions taken from the literature, Section~\ref{sec:imp} shows the impossibility of resilient-optimal crash-fault tolerance in the presence of rational players. Section~\ref{sec:crashbyz} establishes the relation between Byzantine players with rational players and crash players, first by showing in Section~\ref{sec:crashbyz1} that for every Byzantine player tolerated by a consensus protocol, the same protocol can instead tolerate one rational player and one crash player, and second by showing in Section~\ref{sec:crashbyz2} that if a consensus protocol tolerates $k$ rational players and $t$ crash players then there is a consensus protocol that tolerates $\min(k,t)$ Byzantine players. 
We finally conclude and detail future work in Section~\ref{sec:con}.

  
\section{Preliminaries}
\label{sec:model}

Our focus is on a partially synchronous communication network~\cite{DLS88}, where
messages can be delayed by up to a bound that is unknown, but not
indefinitely.
For this purpose, we adapt the synchronous and asynchronous models of
Abraham et al.~\cite{ADGH,abraham2019implementing} to partial
synchrony, including the definitions of Halpern et al.~\cite{Halpern2020} to introduce crash players, with the appropriate modifications to account for partial synchrony~\cite{ranchal2021huntsman}. 
Hence, our model consists of a game played by a set $N$ of
players, with $|N|=n$. The players in $N$ can be of four different types:
correct, rational, crash or Byzantine. 

In order to model partial synchrony, we introduce the \scheduler as an additional player that will model the delay on messages derived from partial synchrony.
The game is in \emph{extensive form}, described by a game tree whose
leaves are labeled by the utility $u_i$ of each player $i$. We assume that
players alternate making moves with the \emph{\scheduler}: first the
\scheduler moves, then a player moves, then the \scheduler moves and
so on. The scheduler's move consists of choosing a player $i$ to move
next and a set of messages in transit targeting $i$ that will be delivered
just before $i$ moves (so that $i$'s move can depend on all the
messages $i$ receives). Every non-leaf tree node is associated with either
a player or the \scheduler. The \scheduler is bound to two
constraints. First, the \scheduler can choose to delay any message up
to a bound, known only to the \scheduler, before which he must have
chosen all receivers of the message to move and provided them with
this message, so that they deliver it before making a move. Second,
the \scheduler must eventually choose all players that are still
playing. That is, if player $i$ is playing at time $e$, then the
\scheduler chooses him to play at time $e'\geq e$.

Each player $i$ has some \textit{local state} at each node, which
translates into the initial information known by $i$, the messages $i$
sent and received at the time that $i$ moves, and the moves that $i$
has made. The tree nodes where a player $i$ moves are further partitioned
into \textit{information sets}, which are sets of nodes
in the game tree that contain the same local state for the same
player, in that such player cannot distinguish them. We assume that
the \scheduler has complete information, so that the \scheduler's
information sets simply consist of singletons.

Since faulty or rational players can decide not to move during their
turn, we assume that players that decide not to play will at least
play the \textit{default-move}, which consists of notifying to the
\scheduler that this player will not move, so that the game continues
with the \scheduler choosing the next player to move. Thus, in every node where the scheduler is to play a move, the \scheduler can play any move that combines a player and a subset of messages that such player can deliver before playing.
 Then,
the selected player moves, after which the \scheduler selects again
the next player for the next node, and the messages it receives, and
so on. The \scheduler thus alternates with one player at each node
down a path in the game tree up to a leaf. A \textit{run} of the game
is then a path in the tree from the root to a leaf. 

\paragraph*{Strategies}  
We denote the set of actions of a player $i$ (or the \scheduler) as
$A_i$, and a strategy $\sigma_i$ for that set of actions is denoted as
a function from $i$'s information sets to a distribution over the
actions. 

We denote the set of all possible strategies of player $i$ as
$\mathcal{S}_i$. Let $\mathcal{S}_I=\Pi _{i\in I} \mathcal{S}_i$ and
$A_I=\Pi_{i\in I} A_i$ for a subset $I\subseteq N$. Let
$\mathcal{S}=\mathcal{S}_N$ with $A_{-I}=\Pi _{i\not\in I} A_i$ and
$\mathcal{S}_{-I}=\Pi _{i\not\in I} \mathcal{S}_i$.  A \textit{joint
strategy} $\vv{\sigma}=(\sigma_0,\sigma_1,...,\sigma_{n-1})$ draws
thus a distribution over paths in the game tree (given the scheduler's strategy $\sigma_s$), where $u_i(\vv{\sigma},\sigma_s)$ is player's $i$ expected utility if $\vv{\sigma}$ is played along with a strategy for the scheduler $\sigma_s$. A strategy $\theta_i$ \textit{strictly dominates} $\tau_i$ for $i$ if for all $\vv{\phi}_{-i}\in \mathcal{S}_{-i}$ and all strategies $\sigma_s$ of the \scheduler we have $u_i(\theta_i,\vv{\phi}_{-i},\sigma_s)>u_i(\tau_i,\vv{\phi}_{-i},\sigma_s)$.

Given some desired functionality
$\mathcal{F}$, a protocol is the recommended joint strategy
$\vv{\sigma}$ whose outcome satisfies $\mathcal{F}$, and an associated
game $\Gamma$ for that protocol is defined as all possible deviations
from the protocol~\cite{ADGH}. In this case, we say that the protocol
$\vv{\sigma}$ \textit{implements} the desired functionality. Note that
both the \scheduler and the players can use probabilistic strategies.
\paragraph*{Failure model}
$k$ players out of $n$ can be rational and up to $t$ of them can be
faulty (i.e. Byzantine or crash), while the rest are correct. Correct players follow the
protocol: the expected utility of correct player $i$ is equal and
positive for any run in which $i$ follows the protocol, and $0$ for
any other run. Rational players can deviate to follow the strategy
that yields them the highest expected utility at any time they are to
move, while Byzantine players can deviate in any way, even not
replying at all (apart from notifying the \scheduler that they will
not move). A crash player $i$ behaves exactly as a correct player,
except that it can crash in any round of any run. If $i$ crashes in
round $m$ of run $r$, then it may send a message to some subset of
agents in round $m$, but from then on, it sends no further
messages (except for playing the default-move). We will detail further the utilities of rational players
after defining the Byzantine consensus problem.

We let rational players in a coalition and Byzantine
players (in or outside the coalition) know the types of all players,
so that these players know which players are the other faulty,
rational and correct players, while
the rest of the players only know the upper bounds on the number of
rational and faulty players, i.e., $k$ and $t$ respectively, and
their own individual type (that is, whether they are rational,
Byzantine, crash or correct). 

\paragraph*{Cheap talks}
As we are in a fully distributed system, without a trusted central
entity like a mediator, we assume \textit{cheap-talks}, which are
private pairwise communication channels. We also assume negligible
communication cost through these channels. Rational and correct players are
not interested in the number of messages exchanged. Similarly, we
assume the cost of performing local computations (such as validating
proposals, or verifying signatures) to be negligible.

\paragraph*{Cryptography}
We require the use of cryptography, for which we reuse the assumptions
of Goldreich et al.~\cite{goldreichplay}: polynomially bounded players
and the enhanced trapdoor permutations. 
In practice, these two
assumptions mean that players can sign unforgeable messages, and that
they can perform oblivious transfer.

\paragraph*{Robustness}
We restate Abraham's et al.~\cite{ADGH} definitions of $t$-immunity, $\epsilon$-$(k,t)$-robustness and the most recent definition of $k$-resilient
equilibrium~\cite{abraham2019implementing} to consider multiplayer deviations. We also add the definitions of $t$-crash-immune and $(k,t)$-crash-robust equilibrium. Notice that our definition of $(k,t)$-crash-robust equilibrium differs from Bei et al's $(c,t)$-resilient equilibrium, since we simply extend Abraham et al's~\cite{ADGH} definition of a $(k,t)$-robust equilibrium. While this definition is too restrictive for Bei et al's model, it fits properly our illustration of the utilities of the rational players that we define in this work.

The notion of $k$-resilience is motivated in distributed computing by
the need to tolerate a coalition of $k$ rational players that can all coordinate
actions. A joint strategy is $k$-resilient if no coalition of size $k$
can gain greater utility by deviating in a coordinated way. 
\begin{defn}[$k$-resilient equilibrium]
  A joint strategy $\vv{\sigma}\in \mathcal{S}$ is a \textit{$k$-resilient equilibrium} (resp. \textit{strongly k-resilient equilibrium}) if, for all sets $K$ of rational players such that $K\subseteq N$ with $|K|\leq k$, all $\vv{\tau}_{K}\in \mathcal{S}_K$, all strategies $\sigma_s$ of the \scheduler, and for some (resp. all) $i\in K$ we have:
  $u_i(\vv{\sigma}_K, \vv{\sigma}_{-K},\sigma_s) \geq u_i(\vv{\tau}_{K}, \vv{\sigma}_{-K},\sigma_s).$
\end{defn}

The notion of $t$-immunity is motivated by
the need to tolerate $t$ faulty players. An equilibrium is
$t$-immune if the expected utility of the non-faulty players is not affected by the
actions of up to $t$ other faulty players. The $\epsilon$ here accounts for the (small) probability
of the coalition breaking cryptography, as previously assumed in the literature~\cite{ADGH}:

\begin{defn}[$\epsilon$-$t$-immunity]
  A joint strategy $\vv{\sigma}\in \mathcal{S}$  is $\epsilon$-$t$-immune if, for all sets $T$ of Byzantine players such that $T\subseteq N$ with $|T|\leq t$, all $\vv{\tau}\in \mathcal{S}_T$, all strategies $\sigma_s$ of the \scheduler, and all $i\not \in T$, we have:
  $u_i(\vv{\sigma}_{-T}, \vv{\tau}_T,\sigma_s) \geq u_i(\vv{\sigma},\sigma_s)-\epsilon.$
\end{defn}


A joint strategy $\vv{\sigma}$ is an $\epsilon$-$(k,t)$-robust
equilibrium if no coalition of $k$ rational players can coordinate to
increase their expected utility by $\epsilon$ regardless of the
arbitrary behavior of up to $t$ faulty players, even if the
faulty players join their coalition. 


\begin{defn}[$\epsilon$-$(k,t)$-robust equilibrium]
  Let $K$ denote the set of $|K|=k$ rational players and let $T$ denote the set of $|T|\leq t$ Byzantine players, $K\cap T = \emptyset$. A joint strategy $\vv{\sigma}\in \mathcal{S}$ is an $\epsilon$-$(k,t)$-robust equilibrium if for all $K, T\subseteq N$, for all $\vv{\tau}_T \in \mathcal{S}_T$, for all $\vv{\phi}_K\in\mathcal{S}_K$, and all strategies of the \scheduler $\sigma_s$, there exists $i\in K$ such that:
  \begin{equation*}
    u_i(\vv{\sigma}_{-T},\vv{\tau}_T,\sigma_s)\geq u_i(\vv{\sigma}_{N-(K\cup T)}, \vv{\phi}_K, \vv{\tau}_T,\sigma_s)-\epsilon.
    \end{equation*}
\end{defn}
If $\epsilon=0$, we simply refer to an $\epsilon$-$(k,t)$-robust equilibrium as a 
$(k,t)$-robust equilibrium, and an $\epsilon$-$t$-immune protocol as a $t$-immune protocol. 

  Given some game $\Gamma$ and desired functionality $\mathcal{F}$, we
say that a protocol $\vv{\sigma}$ is a $k$-resilient protocol for
$\mathcal{F}$ if $\vv{\sigma}$ implements $\mathcal{F}$ and is a $k$-resilient equilibrium. 
We extend this notation to $t$-immunity and
$\epsilon$-$(k,t)$-robustness. 
The required functionality of this paper is thus reaching consensus.
  
\paragraph*{Punishment and baiting strategy}
 We also restate the definition of a punishment strategy~\cite{ADGH}
as a threat from correct and rational players in
order to prevent other rational players from deviating. For example,
in a society, this threat can be viewed as a punishment strategy of the judicial system against
committing a crime. Slashing a player if he is found to be guilty of
deviating is another punishment strategy.

The punishment strategy guarantees that if $k$ rational players deviate, then $t+1$ players can lower the utility of these rational players by playing the punishment strategy. 

  \begin{defn}[$(k, t)$-punishment strategy]  Let $K,T,P\subseteq N$ be disjoint sets of rational players, Byzantine players and correct players, respectively, such that $|K|\leq k,|T|\leq t,|P|> t$.
  A joint strategy $\vv{\rho}$ is a $(k,t)$-punishment strategy with respect to $\vv{\sigma}$ if 
  for all $\vv{\phi}_K\in \mathcal{S}_K$, for all $i\in K$, and all strategies of the \scheduler $\sigma_s$, we have:\vspace{-1em}

    \begin{equation*}
      u_i(\vv{\sigma}_{-T},\vv{\tau}_T,\sigma_s) > u_i(\vv{\sigma}_{N-(K\cup T \cup P)}, \vv{\phi}_K, \vv{\tau}_T, \vv{\rho}_P,\sigma_s)
    \end{equation*}
  \end{defn}
  Intuitively, a punishment strategy represents a threat to prevent rational players from deviating, in that if they deviate, then players in $P$ can play the punishment strategy $\vv{\rho}$, 
  which decreases the utility of rational players with respect to following the strategy $\vv{\sigma}$. 

  Recent works consider a particular type of punishment strategies, defined as baiting strategies~\cite{ranchal2021huntsman}. In baiting strategies, some players that enforce the punishment on the deviants are rational players from within the coalition: they have an incentive to bait other players into a punishment strategy. An example of a baiting strategy can be found when law-enforcement officers offer an economic reward, or a reduced sentence, if a member of a criminal group helps them arrest the rest of the group.

  \begin{defn}[$(k, t, m)$-baiting strategy]\label{def:ekf-bs}
    Let $K,T\subseteq N$ be disjoint sets of rational players and Byzantine players, respectively. Let $P\subseteq N$ be a set of baiters, composed of rational and correct players, and let the rest of correct players be $C=N-(K\cup T \cup P)$. A joint strategy $\vv{\eta}$ is a $(k,t,m)$-baiting strategy with respect to a strategy $\vv{\sigma}$ if $\vv{\eta}$ is a $(k-m,t)$-punishment strategy with respect to $\vv{\sigma}$, with $0< m \leq k$, $|K|\leq k,|T|\leq t, |P|> t,|P\cap K|\geq m$, for all $\vv{\tau}\in \mathcal{S}_T$, all $\vv{\phi}_{K\backslash P}\in \mathcal{S}_{K\backslash P}-\{\vv{\sigma}_K\}$, all $\vv{\theta}_{P}\in \mathcal{S}_{P}$, all $i\in P$, and all strategies of the \scheduler $\sigma_s$, we have:
    \begin{equation*}
      u_i(\vv{\sigma}_{C}, \vv{\phi}_{K\backslash P}, \vv{\tau}_T, \vv{\eta}_P, \sigma_s)\geq       u_i(\vv{\sigma}_{C}, \vv{\phi}_{K\backslash P},\vv{\tau}_T,\vv{\theta}_{P},\sigma_s)
    \end{equation*}
    Additionally, we call this strategy a strong $(k,t)$-\textit{baiting strategy} in the particular case where for all rational coalitions $K \subseteq N$ such that $|K|\leq k+f$, $|K\cap P|\geq m$ and all $\vv{\phi}_{K\backslash P}\in \mathcal{S}_{K\backslash P}$, we have:
    \begin{equation*}
      \sum_{i\in K}u_i(\vv{\sigma}_{N-(K\cup P)}, \vv{\phi}_{K\backslash P}, \vv{\eta}_P,\sigma_s) \leq \sum_{i\in K}u_i(\vv{\sigma}, \sigma_s).
    \end{equation*}
    We write (strong) $(k,t)$-baiting strategy instead to refer to a (strong) $(k,t,m)$-baiting strategy for some $m$, with $0< m \leq k$.
    \end{defn}

    We also refer to an \textit{effective} baiting strategy if playing such strategy still implements the desired functionality. An example of an effective baiting strategy for the problem of consensus is rewarding a baiter for exposing a disagreement attempt before it takes place, if it is resolved~\cite{ranchal2021huntsman}. Strong baiting strategies are strategies in which the coalitions formed entirely by rational players end up collectively losing compared to if they had just followed the protocol, even if a subset of them play the baiting strategy. This prevents coalitions from baiting themselves just for the purpose of maximizing the sum of their
    utilities. 

    We extend the above-defined terms to their analogous crash fault tolerant counterparts by replacing Byzantine players by crash players in all their definitions, in what we refer to as $(t)$-crash-immunity, $(k,t)$-crash-robustness, $(k,t)$-crash-punishment strategy and $(k,t,m)$-crash-baiting strategy.

 \paragraph*{Consensus}
We recall the Byzantine consensus problem~\cite{LSP82} in the presence of rational players:
The \emph{Byzantine consensus problem} is, given $n$ players, each with an initial value, to ensure (1) \emph{Agreement}, in that no two non-deviating players decide different values, (2) \emph{Validity} in that if a non-deviating player decides a value $v$, then $v$ has been proposed by some player, and (3) \emph{Termination} in that all non-deviating players eventually decide.


\paragraph*{Disagreements} Notice a disagreement of consensus can mean two or more disjoint
groups of non-deviating players deciding two or more 
conflicting decisions~\cite{singh2009zeno}. We speak of the
\textit{disagreeing} strategy as the strategy in which deviating
players collude to produce a disagreement, and of a coalition
\textit{disagreeing} to refer to a coalition that plays the
disagreeing strategy.  

 \paragraph*{Rational players} The utilities of a rational player $i$ depending on its actions and a particular run of the protocol are as follows:
    \begin{enumerate}
      \item \label{str:cor} If an agreement is reached, then $i$ gets utility
$u_i(\vv{\sigma}_{-T},\vv{\tau}_T)\geq \epsilon$ where $\epsilon>0$.
      \item \label{str:dis} If coalition where $i$ is a member successfully performs a disagreement, then $i$ gets utility $u_i(\vv{\sigma}_{N-K-T},\vv{\phi}_{K\cup T})=g>\epsilon$.
      \item \label{str:ben}If the protocol does not terminate, then $i$ obtains a negative utility.
      \item \label{str:ter} If player $i$ suffers a disagreement caused by a coalition external to $i$, then $i$ obtains a negative utility.
      \end{enumerate}

      Note that we do not consider fairness of consensus, computational costs, communication costs, or a preference from a proposal over another, but we rather focus on the interests in causing a disagreement. As such, setting the expected utility of causing agreement to be greater than that of causing disagreement would inevitably lead to rational players behaving exactly as correct players. Similarly, selecting the utilities of not terminating greater than the utilities of causing agreement would lead to all rational players behaving as crash players. For both of these cases, the state-of-the-art bounds are applicable. The set of utilities that we choose here also reflects a realistic behavior of rational players in the blockchain context, where players can get an economic incentive from a fork (as it is the case when they double spend by forking).
      It is easy to notice that not terminating as well as deciding a proposal while a coalition causes a disagreement are strictly dominated by reaching agreement, which is in turn strictly dominated by causing a disagreement.
     A baiting strategy means 
     a rational player possibly 
     joining 
     the coalition only to later betray it in exchange of a reward. This additional strategy would strictly dominate the strategy to disagree by definition. 

\section{Impossibility result}
\label{sec:imp}

In this section, we show that resilient-optimal protocols cannot tolerate even one rational player. 
Previous results 
showed that a resilient-optimal protocol 
tolerates up to $t<n/2$ crash faults~\cite{DLS88}. We show in Lemma~\ref{lem:imp1} and Theorem~\ref{thm:imp1} that this number of crash faults does not tolerate even one rational player. 
\begin{lem}
  Let $\vv{\sigma}$ be a protocol that implements consensus such that there is no $(k,t,m)$-crash-baiting strategy with respect to $\vv{\sigma}$. Then, it is impossible for $\vv{\sigma}$ to be $t$-crash-immune and $k$-resilient for $k+2t\geq n,\, m>\frac{k-n}{2}+t$.
  \label{lem:imp1}
\end{lem}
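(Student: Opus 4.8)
Suppose, for contradiction, that $\vv{\sigma}$ implements consensus, is $t$-crash-immune and $k$-resilient, satisfies $k+2t\ge n$, and admits no $(k,t,m)$-crash-baiting strategy with $m>\frac{k-n}{2}+t$. The heart of the argument is the classical split-brain construction, adapted to partial synchrony and crash faults: it produces a coordinated deviation of a $k$-player coalition $K$ that forces a disagreement, after which I argue that the only way $\vv{\sigma}$ could survive it is by carrying exactly the baiting strategy the hypothesis rules out.

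Since $n-k\le 2t$, partition $N\setminus K$ into $L$ and $R$ with $|L|,|R|\le t$. First, using $t$-crash-immunity, I fix two one-sided executions. In $e_L$ the players of $R$ crash at the outset before sending any message, every player starts with value $0$, and the players of $L\cup K$ run $\vv{\sigma}$ while the \scheduler delivers the remaining messages promptly; since $|R|\le t$, crash-immunity gives the correct players of $L$ their full positive payoff, so consensus holds for them and, all initial values being $0$, all of $L$ decides $0$. Symmetrically, in $e_R$ the players of $L$ crash, every player starts with $1$, and all of $R$ decides $1$. Now consider the fault-free execution $e^{*}$ in which $L$ starts with $0$, $R$ starts with $1$, and $K$ deviates by maintaining two independent simulations of $\vv{\sigma}$: it sends $L$ exactly the messages prescribed by its $e_L$-simulation (in which $K$ pretends its inputs are $0$ and $R$ has crashed) and sends $R$ exactly those prescribed by its $e_R$-simulation, while the \scheduler --- which in partial synchrony may postpone any message by an arbitrary finite amount --- withholds every $L$--$R$ message, in both directions, until both sides have decided. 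Then $L$ reaches in $e^{*}$ the same deciding information set it reaches in $e_L$, so it decides $0$; likewise $R$ decides $1$: a permanent disagreement among non-deviating players, caused by $K$. Hence every $i\in K$ receives the disagreement payoff $g$, which strictly exceeds the agreement payoff $u_i(\vv{\sigma},\sigma_s)$ (causing a disagreement strictly dominates reaching agreement in the chosen utilities), so for the delaying \scheduler $\sigma_s$ and this split-brain $\vv{\phi}_K$ no $i\in K$ satisfies $u_i(\vv{\sigma},\sigma_s)\ge u_i(\vv{\phi}_K,\vv{\sigma}_{-K},\sigma_s)$ --- contradicting $k$-resilience unless $\vv{\sigma}$ manages to prevent this disagreement.

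It remains to show that any such prevention is a $(k,t,m)$-crash-baiting strategy with $m>\frac{k-n}{2}+t$. A disagreement is irrevocable once two correct players decide differently, so a deterrent must act \emph{before} the disagreement is ``successfully performed''; a punishment imposed afterwards cannot push the coalition's payoff below $g$. An ordinary $(k,t)$-crash-punishment strategy is played by more than $t$ \emph{correct} players, but in $e^{*}$ the correct players of $L$ cannot distinguish their run from $e_L$, and those of $R$ cannot distinguish theirs from $e_R$, until after they have irrevocably decided --- so correct players cannot intervene in time. The only parties that know an attack is underway are the members of $K$, so the sole effective remedy is for some of them to abandon the split-brain and expose it in time, which is precisely (an instance of) a $(k,t,m)$-crash-baiting strategy. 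A balancing count then fixes $m$: with $m$ defectors no longer feeding a clean transcript, side $L$ is still fooled into deciding $0$ provided it can reach a deciding node having heard only from $L$ and from the $k-m$ remaining simulators --- and since a $t$-crash-tolerant protocol in partial synchrony must let a correct player decide without waiting for the up to $t$ players that may have crashed, this happens exactly when $|R|+m\le t$; symmetrically $R$ decides $1$ when $|L|+m\le t$. As $|L|+|R|=n-k$, both can be arranged simultaneously by a suitable partition iff $m\le\frac{k-n}{2}+t$, so the disagreement survives every baiting attempt in that range, and an effective $(k,t,m)$-crash-baiting strategy must have $m>\frac{k-n}{2}+t$, contradicting the hypothesis and closing the argument.

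The step I expect to be the main obstacle is this last count. One has to make rigorous both that $e^{*}$ stays indistinguishable from $e_L$ (resp. $e_R$) at the deciding node even in the presence of would-be defectors --- which requires bounding how many of them the \scheduler can keep silent until decision time, and this is where the quantity $\frac{k-n}{2}+t$ enters --- and that $\frac{k-n}{2}+t$ is the exact break-even point between ``few enough defectors that the split-brain still goes through'' and ``enough defectors to abort it'', so that the bound in the statement is tight rather than merely sufficient; verifying that the exposing behavior indeed meets every clause of the baiting-strategy definition is a further detail to be checked.
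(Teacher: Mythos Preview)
Your proof is correct and follows essentially the same split-brain/partition argument as the paper: both hinge on the observation that $t$-crash-immunity forces termination with only $n-t$ participants, so when $k+2t\ge n$ the coalition $K$ can feed two disjoint correct groups of size at most $t$ each into conflicting decisions, and the threshold $m>\frac{k-n}{2}+t$ falls out of the same count ($k-m+|A|<n-t$ and $k-m+|B|<n-t$). Your version is considerably more operational---you spell out the explicit executions $e_L,e_R,e^{*}$, the scheduler's delaying strategy, and an indistinguishability argument for why non-baiting punishment cannot intervene in time---whereas the paper's proof is a terse counting argument that leaves these details implicit; but the underlying idea and the arithmetic are identical.
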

\begin{proof}
If a protocol is $t$-crash-immune, that means that the protocol must
terminate even if $t$ players do not participate in it at all, since
$t$ players may have crashed from the beginning. Therefore, the
protocol must terminate and decide with the participation of $n-t$
players.

  Since the protocol must be able to terminate with the participation
of at most $n-t$ players,
consider now that there are no crash players and there are exactly $k$
rational players. Let us find a disjoint partition of correct players $A$ and $B$
such that $k+|A|+|B|=n$. If $k+|A|\geq n-t$ and $k+|B|\geq n-t$ then
the $k$ rational players can cause a disagreement, which can occur if
$k+2t\geq n$. There is only left to prove that $k$ rational players
will try to cause a disagreement. For this, let us consider the
minimum number of rational players $m$ that must not try to cause a
disagreement for the remaining deviating rational players to not be
able to cause a disagreement. That is, for which values we have
$k-m+|A|<n-t$ and $k-m+|B|<n-t$, hence resulting in
$m>\frac{k-n}{2}+t$. 
Therefore, the utilities for rational players from causing a
disagreement are greater than from causing agreement. This means that
at least enough rational players will deviate and cause the
disagreement unless there is a $(k,t,m)$-baiting strategy that
prevents $m$ rational players from deviating into a disagreement.
\end{proof}

The next theorem follows directly from Lemma~\ref{lem:imp1} because every $(k,t)$-crash-robust protocol must also be $t$-crash-immune and $k$-resilient.
\begin{thm}
  Let $\vv{\sigma}$ be a protocol that implements consensus such that there is no $(k,t)$-crash-baiting strategy with respect to $\vv{\sigma}$. Then, it is impossible for $\vv{\sigma}$ to be $(k,t)$-crash-robust for $k+2t\geq n$.
  \label{thm:imp1}
\end{thm}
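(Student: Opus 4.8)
The plan is to derive Theorem~\ref{thm:imp1} from Lemma~\ref{lem:imp1} by a short chain of implications; the proof is essentially a matter of unwinding the quantifier hidden in ``$(k,t)$-crash-baiting strategy'' and supplying the right value of $m$. I would argue by contradiction: assume $\vv{\sigma}$ implements consensus, admits no $(k,t)$-crash-baiting strategy, $k+2t\ge n$, and yet $\vv{\sigma}$ is $(k,t)$-crash-robust. The first step records that $(k,t)$-crash-robustness implies both $t$-crash-immunity and $k$-resilience --- take the faulty set $T=\emptyset$ in the crash-robustness inequality to recover $k$-resilience, and let the rational coalition play $\vv{\sigma}_K$ to recover $t$-crash-immunity --- exactly as stated in the sentence preceding the theorem. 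Hence $\vv{\sigma}$ is $t$-crash-immune and $k$-resilient.

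The second step chooses $m=\floor*{\tfrac{k-n}{2}+t}+1$, the least integer strictly larger than $\tfrac{k-n}{2}+t$. From $k+2t\ge n$ we get $\tfrac{k-n}{2}+t\ge 0$, so $m\ge 1$, and by construction $m>\tfrac{k-n}{2}+t$, which is the side condition demanded by Lemma~\ref{lem:imp1}. It then remains only to argue that there is no $(k,t,m)$-crash-baiting strategy with respect to $\vv{\sigma}$: reading Definition~\ref{def:ekf-bs} in its crash-fault form, the absence of a $(k,t)$-crash-baiting strategy means precisely that for every integer $m'$ with $0<m'\le k$ there is no $(k,t,m')$-crash-baiting strategy, and if $m>k$ then no $(k,t,m)$-crash-baiting strategy exists at all, since the notion is only defined for $0<m\le k$. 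Either way there is no $(k,t,m)$-crash-baiting strategy with respect to $\vv{\sigma}$, so Lemma~\ref{lem:imp1} applies with this $m$ and yields that $\vv{\sigma}$ cannot be simultaneously $t$-crash-immune and $k$-resilient, contradicting the first step. Therefore no consensus protocol without a $(k,t)$-crash-baiting strategy is $(k,t)$-crash-robust when $k+2t\ge n$.

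The step I expect to require the most care --- though it is still routine --- is this bookkeeping around $m$: one must be careful that ``no $(k,t)$-crash-baiting strategy'' is read as ``for every admissible parameter $m'$, no $(k,t,m')$-crash-baiting strategy'', and that the particular $m$ dictated by Lemma~\ref{lem:imp1} is either itself admissible (so the hypothesis covers it) or too large for any baiting strategy with that parameter to exist. Both facts follow immediately from $k+2t\ge n$ and Definition~\ref{def:ekf-bs}, after which the theorem follows with no additional argument.
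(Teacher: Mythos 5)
Your proposal is correct and follows the same route as the paper, whose entire argument for this theorem is the single sentence that it ``follows directly from Lemma~\ref{lem:imp1} because every $(k,t)$-crash-robust protocol must also be $t$-crash-immune and $k$-resilient.'' Your additional bookkeeping --- choosing $m=\floor*{\frac{k-n}{2}+t}+1$, checking $m\geq 1$ from $k+2t\geq n$, and unwinding ``no $(k,t)$-crash-baiting strategy'' as a universal statement over the admissible parameters $m'$ --- is exactly the detail the paper leaves implicit, and it is sound.
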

\begin{cor}
  Let $\vv{\sigma}$ be a protocol that implements consensus such that there is no $(1,t)$-crash-baiting strategy with respect to $\vv{\sigma}$ and is $t$-crash-immune for $t<n/2$. Then, $\vv{\sigma}$ is not $1$-resilient.
  \label{cor2:imp1}
\end{cor}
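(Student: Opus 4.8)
The plan is to read this off Lemma~\ref{lem:imp1} directly, specialized to $k=1$, rather than off Theorem~\ref{thm:imp1}: the theorem only gives that $\vv{\sigma}$ is not $(1,t)$-crash-robust, and that implication alone does not single out $1$-resilience as the property that breaks, since $(1,t)$-crash-robustness is strictly stronger than the conjunction of $t$-crash-immunity and $1$-resilience (the coalition and the crashed players can combine forces). Lemma~\ref{lem:imp1}, by contrast, concludes precisely that $\vv{\sigma}$ cannot be \emph{simultaneously} $t$-crash-immune and $k$-resilient, which is exactly the leverage needed here. So first I would fix $k=1$ and note that, since a $(k,t,m)$-crash-baiting strategy requires $0<m\leq k$, the only admissible baiting parameter is $m=1$; consequently the hypothesis ``there is no $(1,t)$-crash-baiting strategy with respect to $\vv{\sigma}$'' is literally the hypothesis ``there is no $(1,t,1)$-crash-baiting strategy with respect to $\vv{\sigma}$'' required by the lemma.

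Next I would discharge the two numerical side conditions of Lemma~\ref{lem:imp1} at $k=m=1$. The condition $m>\frac{k-n}{2}+t$ becomes $t<\frac{n+1}{2}$, which is implied by the hypothesis $t<n/2$. The condition $k+2t\geq n$ becomes $1+2t\geq n$; together with $t<n/2$ this pins $t$ down to the resilient-optimal value $t=\frac{n-1}{2}$ (so that $n=2t+1$ and $1+2t=n$), which is exactly the regime the statement addresses — ``a resilient-optimal protocol tolerates up to $t<n/2$ crash faults'' means taking this maximal $t$. With both side conditions in place, and since $\vv{\sigma}$ implements consensus by hypothesis, Lemma~\ref{lem:imp1} applies and yields that $\vv{\sigma}$ cannot be both $t$-crash-immune and $1$-resilient. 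Because $\vv{\sigma}$ is assumed $t$-crash-immune, it follows that $\vv{\sigma}$ is not $1$-resilient.

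The step that needs the most care is not an obstacle so much as bookkeeping: one must (i) confirm that the ``no $(1,t)$-crash-baiting'' hypothesis of the corollary coincides with the ``no $(1,t,m)$-crash-baiting'' hypothesis of the lemma for the unique admissible $m=1$, and (ii) verify that $1+2t\geq n$ is actually compatible with $t<n/2$, which forces the resilient-optimal parameters and thereby places us in a non-vacuous case of the lemma. Everything else is a pure instantiation, and no new argument is required beyond what is already established in Lemma~\ref{lem:imp1}.
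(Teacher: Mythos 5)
Your proof is correct and matches the paper's (implicit) argument: the corollary is stated without proof and is intended to follow by instantiating Lemma~\ref{lem:imp1} at $k=m=1$ exactly as you do, and your remark that the lemma rather than Theorem~\ref{thm:imp1} is the right tool---since failure of $(1,t)$-crash-robustness alone does not isolate $1$-resilience as the property that breaks---is a worthwhile clarification. The one point you only half-notice is that $1+2t\geq n$ together with $t<n/2$ forces $n=2t+1$, hence $n$ odd; for even $n$ the maximal $t<n/2$ gives $1+2t=n-1<n$ and the lemma does not apply at $k=1$, but this looseness is inherited from the corollary's own statement rather than being a defect of your argument.
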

The results from Lemma~\ref{lem:imp1}, Theorem~\ref{thm:imp1}
and Corollary~\ref{cor2:imp1} show that it is necessary to consider new bounds
for CFT protocols in terms of their crash-fault
tolerance, since their resilient-optimal bounds make them vulnerable to even one rational player for state-of-the-art protocols, because they do not consider offering a crash-baiting strategy. In Section~\ref{sec:crashbyz}, we explore the link between crash-robustness and immunity, so as to obtain results for this model with the existing protocols designed for Byzantine faults.


\section{Bridging the gap: crash and rational players as Byzantine players}
\label{sec:crashbyz}
In this section we bridge the gap between games that are robust against Byzantine players and games that are robust against crash players.
\subsection{From Byzantine to crash players}
\label{sec:crashbyz1} It is immediate that a protocol that tolerates up to $t$ Byzantine faults also tolerates up to $t$ crash faults, the question lies with the inclusion of rational players. We propose in Lemma~\ref{lem:rel1} a first relation between Byzantine fault tolerance and crash-fault tolerance in the presence of rational players. 
\begin{lem}
  Let $\vv{\sigma}$ be a protocol that implements consensus and is $\epsilon$-$s$-immune. Then $\vv{\sigma}$ is also $\epsilon$-$(s,s)$-crash-robust.
  \label{lem:rel1}
  \end{lem}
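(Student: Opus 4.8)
The plan is to unfold the definition of $\epsilon$-$(s,s)$-crash-robustness and discharge it directly from $\epsilon$-$s$-immunity together with the assumed payoff profile of the rational players. Fix a set $K$ of $s$ rational players, a set $T$ of at most $s$ crash players with $K\cap T=\emptyset$, a crash strategy $\vv{\tau}_T\in\mathcal{S}_T$, a deviation $\vv{\phi}_K\in\mathcal{S}_K$, and a scheduler strategy $\sigma_s$; the goal is to exhibit some $i\in K$ with $u_i(\vv{\sigma}_{-T},\vv{\tau}_T,\sigma_s)\geq u_i(\vv{\sigma}_{N-(K\cup T)},\vv{\phi}_K,\vv{\tau}_T,\sigma_s)-\epsilon$. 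I would bound the two sides against the nominal payoff $u_i(\vv{\sigma},\sigma_s)$, which equals the agreement payoff since $\vv{\sigma}$ implements consensus: from below on the left using immunity, and from above on the right using the fact that the coalition cannot profit from deviating.

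For the left-hand side, a crash strategy is in particular a (legal) Byzantine strategy, so I would instantiate $\epsilon$-$s$-immunity with the crash set $T$ (of size at most $s$) playing the role of the Byzantine set with strategy $\vv{\tau}_T$: for every $i\notin T$, hence every $i\in K$, this gives $u_i(\vv{\sigma}_{-T},\vv{\tau}_T,\sigma_s)\geq u_i(\vv{\sigma},\sigma_s)-\epsilon$. For the right-hand side, I would use the rational payoff profile: a member $i\in K$ of the deviating coalition is paid strictly more than the agreement payoff $u_i(\vv{\sigma},\sigma_s)$ only on runs where $K$ \emph{successfully causes a disagreement}; on every other run (agreement is reached, the protocol does not terminate, or $i$ suffers an externally caused disagreement) player $i$ is paid at most $u_i(\vv{\sigma},\sigma_s)$. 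So it suffices to show that $K$ cannot cause a disagreement in the run $(\vv{\sigma}_{N-(K\cup T)},\vv{\phi}_K,\vv{\tau}_T,\sigma_s)$; then \emph{every} $i\in K$ witnesses the claim, since $u_i(\vv{\sigma}_{-T},\vv{\tau}_T,\sigma_s)\geq u_i(\vv{\sigma},\sigma_s)-\epsilon\geq u_i(\vv{\sigma}_{N-(K\cup T)},\vv{\phi}_K,\vv{\tau}_T,\sigma_s)-\epsilon$.

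The crux, and the step I expect to be the main obstacle, is proving that $s$ rational deviators together with at most $s$ crash players cannot create a disagreement among the non-deviating players, even though $|K\cup T|$ can be as large as $2s$. I would argue by contradiction and indistinguishability: if such a disagreement occurred, two non-deviating players $p_1,p_2$ would decide conflicting values at finite times $e_1,e_2$; I then build an alternative execution in which the players of $T$ run $\vv{\sigma}$ correctly instead of crashing, while the scheduler (i) feeds each player of $T$ the same incoming messages it received before crashing in the original run, (ii) delivers before $\max(e_1,e_2)$ exactly the outgoing messages from $T$ that were delivered by that time in the original run, and (iii) postpones every remaining message of $T$ past $\max(e_1,e_2)$, which is legitimate in partial synchrony since the scheduler may delay any message by an arbitrary finite bound of its choosing. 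An induction on the run then shows that all players outside $T$, the correct ones and the deviators in $K$ alike, have identical local states up to $\max(e_1,e_2)$, so $p_1$ and $p_2$ still decide conflicting values; but now the only faulty players are the $|K|\leq s$ deviators, so $\epsilon$-$s$-immunity together with $\vv{\sigma}$ implementing consensus forbids two non-deviating players from disagreeing, a contradiction. Combining this with the two bounds above proves the lemma.

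Two routine matters remain for the full write-up. First, the disagreement-impossibility step relies on $\epsilon$ being smaller than the payoff a non-deviating player loses when caught in a disagreement, which is the only regime in which $\epsilon$-robustness has content. Second, if no player of $T$ actually crashes then the alternative execution equals the original one and the inequality is immediate, so the construction is needed only for runs in which some crash player stops.
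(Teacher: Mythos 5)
Your proof is correct, but it reaches the key fact --- that $s$ rational deviators plus $s$ crash players cannot force a disagreement --- by a genuinely different route than the paper. The paper's proof is a quorum-counting argument: from $\epsilon$-$s$-immunity and agreement it derives that the minimum participation threshold $r$ needed to decide satisfies $n-s\geq r>\frac{n+s}{2}$, and then shows that the inequalities $s_k+s_c+|A|\geq r$ and $s_k+|B|\geq r$ required for the coalition to drive two disjoint correct sets $A$ and $B$ to conflicting decisions sum to $\frac{n+s}{2}\geq r$, a contradiction. You instead give a simulation/indistinguishability reduction: replay the disagreeing execution with the crash players running $\vv{\sigma}$ correctly while the scheduler delays their undelivered messages past both decision points, so that every player outside $T$ has an identical view; the resulting run has only the $|K|\leq s$ deviators, and $\epsilon$-$s$-immunity plus the fact that $\vv{\sigma}$ implements consensus rules out the disagreement. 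Your reduction is arguably cleaner and more general --- it does not implicitly assume that decisions are reached via a well-defined participation threshold $r$ and two independent quorums, which the paper's counting step does --- while the paper's argument is more quantitative and makes the bound $r>\frac{n+s}{2}$ explicit (which it then reuses in the termination step). Both proofs handle the incentive side identically, observing that a coalition member can only exceed the agreement payoff by successfully disagreeing, and both lean on the same implicit reading of ``$\epsilon$-$s$-immune and implements consensus'' as guaranteeing agreement against $s$ arbitrary deviators; your closing caveat about $\epsilon$ being small relative to the disagreement penalty makes this dependence more explicit than the paper does.
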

  \begin{proof}
    We prove this by contradiction. Let $r$ be the minimum number of players that must participate in the protocol for it to terminate, it is clear that $r\leq N-s$ since the protocol is $\epsilon$-$s$-immune. As such, let $A$ and $B$ be two disjoint sets of correct players. Since the protocol must also guarantee agreement, it follows by contradiction that if agreement was not satisfied then $|A|+s\geq r$ and $|B|+s\geq r$, but this is only possible if $r\leq \frac{n+s}{2}$. Therefore, we have $N-s \geq r> \frac{n+s}{2}$.

    We define $s_c\leq s$ and $s_k\leq s$ as the maximum combined values of tolerated crash and rational faults, respectively. It is immediate that termination is guaranteed, since rational players will participate and $r\leq N-s \leq N-s_c$. For agreement, we have $|A|+|B|+s_c+s_k<n$, with $s_c\leq s$ and $s_k\leq s$. We consider that $s_c$ crash players crash after having sent some messages only to players in $|A|$ and the $s_k$ rational players, which is the best-case for the deviating coalition (otherwise they crash sending the same message to the entire set of correct players and thus they do not contribute to disagreeing). For the $s_k$ rational players to lead players in $B$ to a different decision than the decision of players in $A$ plus the crash players $s_c$, both $s_k+s_c+|A|\geq r$ and $s_k+|B|\geq r$ must hold. This means that $|A|+|B|+2s_k+s_c\geq 2r\iff n+s_k\geq 2r \iff \frac{n+s}{2}\geq r$, however, this is a contradiction: we already showed that $\frac{n+s}{2}< r$ for $\vv{\sigma}$ to be $\epsilon$-$s$-immune. It follows that if a protocol that implements consensus is $\epsilon$-$s$-immune then it is also $\epsilon$-$(s,s)$-crash-robust.
  \end{proof}
  Notice that the statement of Lemma~\ref{lem:rel1} does not require to assume cryptography, and thus the same result takes place by considering $\epsilon=0$, i.e., $(k,t)$-robustness. The same occurs with Theorem~\ref{thm:rel1}. Lemma~\ref{lem:rel1} establishes a surprising yet meaningful relation between $t$-immunity and $(k,t)$-crash-robustness, further extended by Theorem~\ref{thm:rel1}: if a protocol is $(k,t)$-robust then it is also $(k+t,t)$-crash-robust. We omit the proofs of Theorems~\ref{thm:rel1} and~\ref{thm2:rel1} as they are analogous to that of Lemma~\ref{lem:rel1}.
  \begin{thm}
    \label{thm:rel1}
    Let $\vv{\sigma}$ be a protocol that implements consensus and is $\epsilon$-$(k,t)$-robust. Then, $\vv{\sigma}$ is also $\epsilon$-$(k+t,t)$-crash-robust.
  \end{thm}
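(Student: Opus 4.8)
The plan is to mimic the argument of Lemma~\ref{lem:rel1} with the roles of the parameters shifted: where the lemma showed $\epsilon$-$s$-immunity implies $\epsilon$-$(s,s)$-crash-robustness by a counting argument on how many participants are needed to terminate versus how many are needed to split a decision, here I want to show $\epsilon$-$(k,t)$-robustness implies $\epsilon$-$(k+t,t)$-crash-robustness. The key observation is that a crash player, from the point of view of the partition/counting argument, behaves \emph{exactly} like a correct player that has been silenced at an adversarially chosen moment — and that is weaker than a Byzantine player, but it is also weaker than a rational player who is willing to crash itself. So a coalition of $k+t$ rational players together with $t$ crash players in the crash model can be ``simulated'' (for the purposes of causing disagreement or preventing termination) by the adversarial power available against a $(k,t)$-robust protocol.

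First I would set up the parameters: let $r$ be the minimum number of participants needed for termination. As in the lemma, $\epsilon$-$(k,t)$-robustness forces termination even when $t$ players are absent, so $r \le n - t$; and agreement against the $(k,t)$-adversary forces $r > \tfrac{n+k+2t}{2}$ (the two disjoint correct sets $A,B$ each augmented by the $t$ Byzantine players and sharing influence over the $k$ rational players cannot each reach $r$). Next I would take a putative violation of $(k+t,t)$-crash-robustness: a coalition $K'$ of up to $k+t$ rational players plus up to $t$ crash players that either blocks termination or causes a disagreement. Termination is immediate since rational players still participate and $r \le n-t$. For disagreement, split $K'$ conceptually into $t$ of its rational members who will ``play crash'' (send a message to one side only, then go silent) — these behave like the $t$ crash players already present, giving $2t$ effective crash-like players — and the remaining $\le k$ rational players who genuinely deviate. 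The best case for the coalition is that the $2t$ crash-like players deliver to side $A$ only; then to fork we need $|A| + 2t + k \ge r$ and $|B| + k \ge r$ with $|A|+|B|+k+2t \le n$. Adding these yields $n + k + 2t \ge 2r$, i.e. $r \le \tfrac{n+k+2t}{2}$, contradicting the strict inequality $r > \tfrac{n+k+2t}{2}$ derived from $(k,t)$-robustness. Hence no such coalition exists, and since the $\epsilon$ slack is inherited verbatim from the robustness hypothesis, $\vv{\sigma}$ is $\epsilon$-$(k+t,t)$-crash-robust.

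The main obstacle — and the step that needs care rather than calculation — is justifying that ``$t$ of the $k+t$ rational deviators can be charged against the crash budget.'' One must argue, using the utility model (items~\ref{str:cor}--\ref{str:ter}): a rational player who self-crashes to help a fork is, in the disagreement run, indistinguishable in effect from a genuine crash player, so the counting bound that holds against $t$ Byzantine $+$ $k$ rational adversaries (who can certainly emulate $t$ crash $+$ $k$ rational behaviour) transfers. Equivalently, one shows the crash-model coalition $(k+t,t)$ induces at most the same split-power as the Byzantine-model adversary $(k,t)$, because $2t$ silenced-after-one-message players $\le$ $t$ Byzantine players in disagreement-inducing power while the residual $k$ rational players map to the $k$ rational players of the original equilibrium. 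Once that correspondence is pinned down, the inequality manipulation is exactly the one in Lemma~\ref{lem:rel1} and the proof closes. As the authors note, no cryptographic assumption enters, so the statement holds equally for $\epsilon=0$.
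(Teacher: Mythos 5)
Your overall strategy --- repeat the quorum-counting argument of Lemma~\ref{lem:rel1} with shifted parameters --- is exactly what the paper intends (it omits the proof of Theorem~\ref{thm:rel1} as ``analogous to Lemma~\ref{lem:rel1}''), and the termination half and the $\epsilon$-bookkeeping are fine. However, the two key inequalities are both stated incorrectly, and they only appear to meet in the middle because the errors compensate. First, $\epsilon$-$(k,t)$-robustness does not give $r>\frac{n+k+2t}{2}$: with $|A|+|B|=n-k-t$ correct players split across the two sides and all $k+t$ deviating players equivocating to both sides, a fork needs $|A|+k+t\geq r$ and $|B|+k+t\geq r$, whose sum gives $r\leq\frac{n+k+t}{2}$; so robustness only yields the weaker bound $r>\frac{n+k+t}{2}$. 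Second, and more importantly, your device of ``charging $t$ of the $k+t$ rational deviators against the crash budget'' analyzes a strictly \emph{weaker} adversary than the one you must rule out: a rational player who self-crashes contributes its weight to one side only, whereas the worst-case coalition keeps all $k+t$ rational players equivocating to \emph{both} sides while the $t$ genuine crash players deliver their last messages to one side. That worst case requires $|A|+t+(k+t)\geq r$ and $|B|+(k+t)\geq r$ with $|A|+|B|=n-k-2t$, i.e.\ $r\leq\frac{n+k+t}{2}$. Your restricted coalition only achieves $r\leq\frac{n+k}{2}$ (your stated $n+k+2t\geq 2r$ also miscounts $|A|+|B|$), so on its own it leaves the range $\frac{n+k}{2}<r\leq\frac{n+k+t}{2}$ unexamined --- precisely the range where the true worst-case coalition could still fork.

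The fix is to drop the ``self-crashing'' reduction entirely and count the strongest coalition directly on both sides: the Byzantine-model fork threshold and the crash-model fork threshold for $(k+t)$ rational plus $t$ crash are both exactly $\frac{n+k+t}{2}$ (the $t$ crash players count once, the equivocators count twice), so $r>\frac{n+k+t}{2}$ from robustness contradicts any crash-model fork and the theorem is in fact tight. You should also say a word, as Lemma~\ref{lem:imp1} does, about why enough rational players actually play the forking strategy under the utility model; with those repairs your proof coincides with the paper's intended argument.
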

  By Lemma~\ref{lem:rel1} and Theorem~\ref{thm:rel1}, it is possible to take existing protocols, bounds and other results that apply to immunity and robustness and apply them directly to crash-immunity and crash-robustness. 
  Moreover, Lemma~\ref{lem:rel1} establishes a parametrizable hierarchy between crash faults and Byzantine faults: for every Byzantine fault tolerated by a protocol that solves consensus, the same protocol tolerates instead one crash fault and one rational player. 

  Interestingly, an analogous proof provides the same result for a protocol that tolerates instead crash and Byzantine players. We show this result in Theorem~\ref{thm2:rel1}. However, we first define $\epsilon$-$(t',t)$-immunity to combine tolerance to a number of crash and Byzantine players together:
  \begin{defn}[$\epsilon$-$(t',t)$-immunity]
    A joint strategy $\vv{\sigma}\in \mathcal{S}$  is $\epsilon$-$(t,t')$-immune if, for all sets $T$ of Byzantine players such that $T\subseteq N$ with $|T|\leq t$, all sets $T'$ of crash players such that $T'\subseteq N$, $T\cap T'=\emptyset$, all $\vv{\tau}\in \mathcal{S}_T$, all $\vv{\theta}\in \mathcal{S}_{T'}$, all strategies $\sigma_s$ of the \scheduler, and all $i\not \in T\cup T'$, we have:
    \begin{equation*}
      u_i(\vv{\sigma}_{-\{T\cup T'\}}, \vv{\tau}_T, \vv{\theta}_{T'},\sigma_s) \geq u_i(\vv{\sigma},\sigma_s)-\epsilon.
      \end{equation*}
\end{defn}
  
  \begin{thm}
    \label{thm2:rel1}
    Let $\vv{\sigma}$ be a protocol that implements consensus and is $\epsilon$-$(t',t)$-immune. Then, $\vv{\sigma}$ is also $\epsilon$-$(t,t'+t)$-crash-robust.
  \end{thm}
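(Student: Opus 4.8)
The plan is to mirror the proof of Lemma~\ref{lem:rel1}, merely widening the bookkeeping so that each side of the implication carries both of its fault types. Write $r$ for the least number of players whose participation is needed for $\vv{\sigma}$ to terminate. First I would read off from $\epsilon$-$(t',t)$-immunity the two constraints on $r$. Since the protocol must terminate even when the $t'$ crash players never speak and the $t$ Byzantine players stay silent, $r \le n - t' - t$. For agreement I would argue by contradiction as in Lemma~\ref{lem:rel1}: pick disjoint sets $A,B$ of correct players; if agreement could fail, $A$ could decide using its $|A|$ members, up to $t'$ crash players that send to $A$ before crashing, and the $t$ Byzantine players that equivocate, while $B$ could decide using its $|B|$ members together with the same $t$ equivocating Byzantine players (each crash player picks a side). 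Splitting the crash players between the sides and adding the two threshold inequalities gives $|A|+|B| + t' + 2t \ge 2r$; taking all remaining players faulty, so $|A|+|B| = n - t' - t$, this is $n + t \ge 2r$, so immunity forces $r > \frac{n+t}{2}$. Hence $n - t' - t \ge r > \frac{n+t}{2}$.

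Then I would establish $\epsilon$-$(t,t'+t)$-crash-robustness. Let $t_k \le t$ rational players deviate alongside $t_c \le t'+t$ crash players. Termination is immediate: rational players participate, and $r \le n - t' - t \le n - t_c$. For agreement, suppose the coalition forces a disagreement, with a correct set $A$ deciding one value and a disjoint correct set $B$ another. In the best case for the coalition the $t_c$ crash players split, $t_c^A$ of them sending to $A$ before crashing and $t_c^B$ to $B$, while the $t_k$ rational players equivocate; then $|A| + t_c^A + t_k \ge r$ and $|B| + t_c^B + t_k \ge r$. Summing and using $|A|+|B| = n - t_c - t_k$ yields $n + t_k \ge 2r$, that is $r \le \frac{n+t_k}{2} \le \frac{n+t}{2}$, contradicting $r > \frac{n+t}{2}$. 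So disagreement is unreachable; since causing a disagreement is the only outcome a rational player strictly prefers to agreement (non-termination and suffering an external disagreement being strictly dominated by agreement), following $\vv{\sigma}$ remains an $\epsilon$-$(t,t'+t)$-crash-robust equilibrium, with $\epsilon$ absorbing the probability of breaking cryptography exactly as before. As with Lemma~\ref{lem:rel1}, the agreement and termination claims do not use cryptography, so the $\epsilon = 0$ version holds too.

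The step I expect to be the crux is keeping straight which faults can be counted twice. Byzantine players on the immunity side, and rational players on the crash-robustness side, help both decision sets through equivocation, whereas crash players help at most one set; this asymmetry is exactly what makes the budget shift by $t$ (the number of equivocators) rather than by $t' + t$, and it is the only place the argument departs in substance from Lemma~\ref{lem:rel1}. A smaller point worth stating carefully is that the crash budget $t' + t$ in the conclusion is tight for termination, not for agreement: it is forced by $r \le n - t_c$ together with the immunity bound $r \le n - t' - t$, since a larger crash budget would already break termination before agreement becomes an issue.
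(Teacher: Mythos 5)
Your proof is correct and follows exactly the route the paper intends: the paper omits the proof of Theorem~\ref{thm2:rel1}, stating only that it is analogous to Lemma~\ref{lem:rel1}, and your argument is precisely that analogue with the quorum bound $n-t'-t \ge r > \frac{n+t}{2}$ in place of $n-s \ge r > \frac{n+s}{2}$. Your observation about which faults are double-counted (equivocators on both sides, crashers on one) is exactly the asymmetry that drives Lemma~\ref{lem:rel1} as well, so nothing further is needed.
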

  

  \subsection{From crash to Byzantine players}
\label{sec:crashbyz2}
  One may wonder if the same result listed in Lemma~\ref{lem:rel1} is true in the opposite direction, that is, whether a protocol $\vv{\sigma}$ that is $\epsilon$-$(k,t)$-crash-robust is also $\epsilon$-$\ms{fun}(k,t)$-immune where $\ms{fun}(k,t)=s$ for some $s>0$. 
  
  We prove in Lemma~\ref{lem:rel2} that we can construct a protocol that implements consensus and is $\epsilon$-$s$-immune based on a protocol that is $\epsilon$-$(k,t)$-crash-robust for $s=\min(k,t)$, assuming cryptography and that the protocol does not implement a $(k,t)$-crash-baiting strategy. 
  \begin{lem}
    Let $\vv{\sigma}$ be an $\epsilon$-$(k,t)$-crash-robust protocol that implements consensus without a $(k,t)$-crash-baiting strategy with respect to $\vv{\sigma}$. Then, assuming cryptography and a public-key infrastructure scheme, there is an $\epsilon$-$\min(k,t)$-immune protocol $\vv{\sigma}'$ that implements consensus.
    \label{lem:rel2}
  \end{lem}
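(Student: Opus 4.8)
The plan is to \emph{augment} $\vv{\sigma}$ with the public-key infrastructure rather than to build a consensus protocol from scratch. Let $\vv{\sigma}'$ be the protocol obtained from $\vv{\sigma}$ by having every player sign each message it sends and attach to it a \emph{justification}: the signed messages whose receipt, according to $\vv{\sigma}$, would make this very message the one a correct player sends. Under $\vv{\sigma}'$ a player discards every message that is unsigned, malformed, or whose justification does not match $\vv{\sigma}$, and, upon observing two conflicting signed messages from the same player $p$, it gossips this equivocation proof and from then on ignores $p$ entirely, i.e., treats $p$ as crashed. Honest executions of $\vv{\sigma}'$ coincide with executions of $\vv{\sigma}$ (honest messages are always justified and never conflict), so $\vv{\sigma}'$ implements consensus, and it remains $t$-crash-immune since the added machinery is inert on honest-or-crashed behavior. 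Moreover, because $\vv{\sigma}$ is $(k,t)$-crash-robust and admits no $(k,t)$-crash-baiting strategy, Theorem~\ref{thm:imp1} gives $n>k+2t$, hence $n>3\min(k,t)$; write $s=\min(k,t)$.

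I would then establish \emph{agreement} and \emph{validity} of $\vv{\sigma}'$ against an arbitrary set $T$ of $|T|\le s$ Byzantine players by reducing to a rational attack on $\vv{\sigma}$. Suppose a disagreement occurred among non-faulty players in a run of $\vv{\sigma}'$. Since non-faulty players only ever act on signed, justified messages and the Byzantine players cannot forge others' signatures, this run is reproduced in $\vv{\sigma}$ by a coalition $K=T$ of $|K|\le k$ rational players that sends exactly the (signed, justified) messages the Byzantine players sent; the extra checks and the equivocation-exclusion performed under $\vv{\sigma}'$ only ever \emph{remove} messages, which a coalition in $\vv{\sigma}$ can emulate by simply not sending them. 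Then the identical disagreement occurs in $\vv{\sigma}$, so every member of $K$ gets the disagreement utility $g$, which exceeds $u_i(\vv{\sigma})$ by a fixed positive gap larger than the negligible $\epsilon$, contradicting $\epsilon$-$(k,t)$-crash-robustness with an empty set of crash players. (Alternatively, $n>k+2t\ge 3s$ already forbids $\le s$ deviators from splitting the $\ge n-t$ participants into two groups that both reach a decision, by the counting in the proof of Lemma~\ref{lem:imp1}.) Validity is immediate: a decided value must be supported by signed messages tracing back to an actual proposal.

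It remains to argue \emph{termination}, and this is the step I expect to be the main obstacle. The idea is that the justification requirement together with the equivocation-exclusion rule collapses every Byzantine player, as seen by the non-faulty players, into a crash-like one: either it stays silent (a crash from the outset), or every message of it that is ever accepted is signed, justified and mutually consistent, so it is indistinguishable from a slow correct player (harmless in partial synchrony), or it equivocates and is thereafter permanently ignored by everyone (a late crash). Thus the run carries at most $s\le t$ effective crash faults, and $t$-crash-immunity of $\vv{\sigma}'$ yields termination with valid, agreeing decisions; hence non-faulty players' utility drops by at most $\epsilon$, the probability of a signature forgery, so $\vv{\sigma}'$ is $\epsilon$-$\min(k,t)$-immune. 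Making this Byzantine-to-crash collapse rigorous is the delicate part: one must define the justification mechanism for an arbitrary, possibly randomized, recommended strategy $\vv{\sigma}$, show equivocation is detected within a bounded delay under partial synchrony, and rule out subtler liveness attacks such as a Byzantine player attaching a \emph{valid} justification for a stale local state. It is precisely here that the absence of a crash-baiting strategy --- equivalently $n>k+2t$ --- is used: without it, $s$ Byzantine players, with no rational coalition member present to take the bait, could split the participants and defeat agreement outright.
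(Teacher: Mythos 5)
Your proposal is correct and follows essentially the same route as the paper's proof: both augment $\vv{\sigma}$ with signatures, gossiping of all signed messages, and permanent exclusion of players caught equivocating (so that each Byzantine player collapses to a silent, slow-correct, or late-crashed one), and both use the absence of a $(k,t)$-crash-baiting strategy to reduce a Byzantine equivocation/disagreement attack to a rational-coalition attack already ruled out by $\epsilon$-$(k,t)$-crash-robustness. The paper organizes the argument as a case analysis of six Byzantine deviations rather than property by property, and does not introduce your explicit ``justification'' attachments, but the substance and the level of rigor (including the admittedly delicate Byzantine-to-crash collapse for termination) are the same.
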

  \begin{proof}
    We show how we create the protocol $\vv{\sigma}'$ from $\vv{\sigma}$ to tolerate the new deviations that Byzantine players can follow. We list the possible deviations of $t$ Byzantine players in a protocol $\vv{\sigma}'$:
    \begin{enumerate}
    \item\label{dev:1} Byzantines players force disagreement by sending equivocating messages.
    \item\label{dev:2} Byzantine players stop replying.
    \item\label{dev:3} Byzantine players reply only to a subset of the correct players.
    \item\label{dev:4} Byzantine players reply wrongly formatted messages.
    \item\label{dev:5} Byzantine players force non-termination by sending equivocating messages.
    \item\label{dev:6} Byzantine players force non-decision (empty decision) by sending equivocating messages.
    \end{enumerate}
    Protocol $\vv{\sigma}$ already tolerates coalitions of up to $\min(k,t)$ players following deviation \ref{dev:2}, since $t$ crash players can stop replying. For deviation \ref{dev:1}, we show that if the $k$ rational players are not enough to cause a disagreement, then $k$ Byzantine players would also not be enough. Consider instead that $k$ rational players can make a coalition big enough to cause a disagreement, then it is clear that they would cause a disagreement unless there is a $(k,t)$-crash-baiting strategy that prevents it, by definition. Therefore, since $\vv{\sigma}$ does not implement a $(k,t)$-crash-baiting strategy, if the protocol tolerates $k$ rational players trying to deviate then it also tolerates deviation~\ref{dev:1} from Byzantine players. We now show how to construct $\vv{\sigma}'$ to be robust against the rest of the deviations. To tolerate deviation \ref{dev:4}, correct players in $\vv{\sigma}'$ ignore wrongly formatted messages, converting deviation \ref{dev:4} into the same deviation as \ref{dev:2}. Now, we consider deviations \ref{dev:5} and \ref{dev:6}. Since up to $\min(k,t)$ rational players cannot force a disagreement, these players would not even deviate to not terminate or to not decide (their expected utilities from playing such strategies is less than from following $\vv{\sigma}$), however, $\min(k,t)$ Byzantine players can have a greater expected utility from such deviations.

    First, we require every player to broadcast any signed message newly delivered. This makes deviation \ref{dev:3} not a deviation anymore, since every correct player eventually verifies and delivers all messages. Also, in the event of an impasse between two partitions (that is, deviations \ref{dev:5} and \ref{dev:6}), this makes it possible for correct players to gather enough evidence of which processes are responsible for such an event, in that they signed conflicting messages. If protocol $\vv{\sigma}$ decides an empty proposal in the absence of agreement, then we construct $\vv{\sigma}'$ so that it instead repeats the protocol in a new round. This way, we make deviations \ref{dev:5} and \ref{dev:6} the same deviation.

    What is left to prove is that it is impossible for a coalition of up to $\min(k,t)$ Byzantine players to force non-termination by leading correct players into a next round sending equivocating messages. For this purpose, recall that every message sent to a non-empty subset of correct players eventually reaches all correct players since they all broadcast all signed messages they each deliver. As such, correct players are eventually able to gather two conflicting, equivocating messages from each of the deviating players, identifying such set as responsible for the attempted equivocation. Thus, we describe the final modification of $\vv{\sigma}'$ with respect to $\vv{\sigma}$: once a correct player $i$ identifies (via conflicting signed messages) a player $j$ that sent equivocating messages, $i$ ignores any message coming directly from $j$ from that moment on. Notice that this modification thus converts deviations \ref{dev:5} and \ref{dev:6} into either deviation \ref{dev:3} or \ref{dev:2}, and we already showed that $\vv{\sigma}'$ tolerates such deviations as long as the number of Byzantine players is at most $\min(k,t)$. Therefore, we have constructed $\vv{\sigma'}$ extending $\vv{\sigma}$ so that every above-shown deviation from up to $\min(k,t)$ Byzantine players converts into a deviation that $\vv{\sigma}$ already tolerates, meaning that $\vv{\sigma'}$ is $\epsilon$-$\min(k,t)$-immune.
  \end{proof}
  Lemma~\ref{lem:rel2} establishes a relation in the opposite direction from Lemma~\ref{lem:rel1}. We conjecture that this is possible to prove even without the help of cryptography. Nevertheless, we do need to restrict the protocol $\vv{\sigma}$ to be $\epsilon$-$(k,t)$-crash-robust without the help of $(k,t)$-crash-baiting strategies. This is because we can only consider $k$ rational players that behave as Byzantine faults in terms of equivocation, that is, that try to cause a disagreement. The existence of a $(k,t)$-crash-baiting strategy means that some rational players will not try to cause a disagreement, and thus we could not rule out deviation~\ref{dev:2} as a deviation that Byzantine players can follow in order to break safety. 


  We show in Theorem~\ref{thm:rel2} the analogous result to Lemma~\ref{lem:rel2} as Theorem~\ref{thm:rel1} is to Lemma~\ref{lem:rel1}. The proofs of theorems~\ref{thm:rel2} and~\ref{thm4:rel2} are analogous to that of Lemma~\ref{lem:rel2}.
  \begin{thm}
    \label{thm:rel2}
    Let $\vv{\sigma}$ be an $\epsilon$-$(k,t)$-crash-robust protocol that implements consensus without a $(k,t)$-crash-baiting strategy with respect to $\vv{\sigma}$, where $k\geq t$. Then, assuming cryptography and a public-key infrastructure scheme, there is an $\epsilon$-$(k-t,t)$-robust protocol $\vv{\sigma}'$ that implements consensus.
  \end{thm}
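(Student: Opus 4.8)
The plan is to reuse the construction from the proof of Lemma~\ref{lem:rel2}. Starting from the $\epsilon$-$(k,t)$-crash-robust protocol $\vv{\sigma}$, I would build $\vv{\sigma}'$ by adding exactly the same cryptographic scaffolding: every player rebroadcasts each newly delivered signed message, correct players discard wrongly formatted messages (turning a malformed reply into a ``silence'' deviation), an empty / no-agreement decision of $\vv{\sigma}$ is replaced by a fresh round of the protocol, and once a player is caught via two conflicting signed messages, every correct player permanently ignores messages received directly from it. As established in Lemma~\ref{lem:rel2}, these modifications collapse the Byzantine deviations of ``reply to a subset'', ``malformed reply'', ``force non-termination'' and ``force empty decision'' into a crash-like ``stop replying'' deviation, so that the only genuinely new behaviors a Byzantine set $T'$ with $|T'|\le t$ can exhibit in $\vv{\sigma}'$ are (i) equivocating with valid signatures to attempt a disagreement and (ii) crash-like partial participation. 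Since $\vv{\sigma}$ implements consensus and none of the modifications break agreement, validity or termination, $\vv{\sigma}'$ still implements consensus.

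The heart of the argument is a \emph{simulation} of any deviating run of $\vv{\sigma}'$ inside the tolerance budget of $\vv{\sigma}$. Fix a rational coalition $K'$ with $|K'|=k-t$ (here is where $k\ge t$ is used), a Byzantine set $T'$ with $|T'|\le t$, and arbitrary deviations $\vv{\phi}_{K'}$, $\vv{\tau}_{T'}$, and a scheduler strategy $\sigma_s$. I would classify each player of $T'$ by which residual behavior it ends up using: a player that only ever stops or replies to a subset is mapped onto a crash slot of $\vv{\sigma}$, whereas a player that produces signed equivocating messages is mapped onto a rational slot of $\vv{\sigma}$ (a Byzantine equivocator cannot harm agreement more than a rational equivocator, since it produces the same kind of conflicting signed evidence). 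In the worst case all of $T'$ equivocate, giving at most $(k-t)+t=k$ equivocating rational players and no crash players; in every other case there are strictly fewer than $k$ equivocators together with at most $t$ crash players. Either way the induced deviation fits the $(k,t)$-crash-fault budget of $\vv{\sigma}$. This is precisely where the hypothesis that $\vv{\sigma}$ admits \emph{no} $(k,t)$-crash-baiting strategy is essential: it guarantees that $\vv{\sigma}$ genuinely tolerates a full coalition of $k$ equivocating rational players rather than relying on baiting some of them away, so no disagreement can be produced against $\vv{\sigma}'$ by $K'\cup T'$.

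It then remains to derive the robustness inequality. By $t$-crash-immunity of $\vv{\sigma}$ (which is implied by $(k,t)$-crash-robustness) and the simulation above, when the players of $K'$ follow $\vv{\sigma}'$ while $T'$ plays $\vv{\tau}_{T'}$ the protocol still terminates with agreement, so every $i\in K'$ gets the agreement utility, which is $\ge\epsilon>0$ by the rational utility model. Under the deviation $\vv{\phi}_{K'}$, the only outcome that would strictly raise a rational player's utility is a successful disagreement (utility $g>\epsilon$), and we have just ruled that out; every other outcome, agreement, non-termination, or suffering an externally caused disagreement, yields at most the agreement utility. The residual $\epsilon$ absorbs exactly the event that the coalition breaks the signature scheme (forging a conflicting message attributed to a correct player, or equivocating undetectably), which is the same $\epsilon$ already present in the crash-robustness of $\vv{\sigma}$. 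Hence for the $i\in K'$ witnessing $k$-resilience of $\vv{\sigma}$ we obtain $u_i(\vv{\sigma}'_{-T'},\vv{\tau}_{T'},\sigma_s)\ge u_i(\vv{\sigma}'_{N-(K'\cup T')},\vv{\phi}_{K'},\vv{\tau}_{T'},\sigma_s)-\epsilon$, i.e.\ $\vv{\sigma}'$ is $\epsilon$-$(k-t,t)$-robust.

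I expect the main obstacle to be making the simulation rigorous when Byzantine players \emph{mix} equivocation with crash-like behavior across rounds (equivocate once, then go silent), and arguing that the rebroadcast-and-ignore rule really forces every equivocator to be detected and neutralized before it can be recycled into a second, independent disagreement attempt; otherwise the classification ``crash slot vs.\ rational slot'' per Byzantine player is not well defined. Care is also needed to confirm that the ``repeat the round on empty decision'' modification cannot be exploited by the $t$ Byzantine players to cause non-termination: as in Lemma~\ref{lem:rel2} this must be reduced to an already-tolerated crash deviation once the offenders are silenced, and one has to check this reduction still goes through with $t$ genuinely Byzantine (rather than merely absent) faulty players and under an adversarial scheduler.
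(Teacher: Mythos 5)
Your proposal follows essentially the same route as the paper: the paper gives no separate proof of Theorem~\ref{thm:rel2} and simply declares it analogous to Lemma~\ref{lem:rel2}, and your argument is precisely that analogy carried out — reuse the rebroadcast/ignore/repeat construction, then account for the $t$ Byzantine players as either crash slots or rational-equivocator slots so that the worst case of $(k-t)+t=k$ equivocators and at most $t$ crashers fits the $(k,t)$-crash-fault budget of $\vv{\sigma}$, with the no-crash-baiting hypothesis guaranteeing the full coalition of $k$ equivocators is genuinely tolerated. Your writeup is if anything more explicit than the paper's about where $k\ge t$ enters and about the residual difficulty of Byzantine players mixing equivocation with silence, which the paper's Lemma~\ref{lem:rel2} handles by the same detect-and-ignore reduction you describe.
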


  Theorem~\ref{thm:rel2} excludes protocols that make use of a $(k,t)$-crash-baiting strategy $\vv{\eta}$ to be $\epsilon$-$(k,t)$-crash-robust, we show in Theorem~\ref{thm2:rel2} that if $\vv{\eta}$ is both an efficient $(k,t)$-crash-baiting strategy and an efficient $(k,t)$-crash-baiting strategy, then there is a protocol that is $\epsilon$-$(k-t,t)$-robust.

    \begin{thm}
    \label{thm2:rel2}
    Let $\vv{\sigma}$ be an $\epsilon$-$(k,t)$-crash-robust protocol that implements consensus such that there is an efficient $(k,t)$-crash-baiting strategy $\vv{\eta}$ with respect to $\vv{\sigma}$, where $k\geq t$. Let $\vv{\sigma'}$ be the BFT-extension of $\vv{\sigma}$, assuming cryptography and a public-key infrastructure scheme. If there is $\vv{\eta'}$ such that $\vv{\eta'}$ is also an efficient $(k-t,t)$-baiting strategy with respect to $\vv{\sigma'}$, then $\vv{\sigma'}$ is an $\epsilon$-$(k-t,t)$-robust protocol that implements consensus.
  \end{thm}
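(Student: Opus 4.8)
The plan is to take $\vv{\sigma'}$ to be the \emph{BFT-extension} of $\vv{\sigma}$ in exactly the sense built in the proof of Lemma~\ref{lem:rel2}: every player re-broadcasts each newly delivered signed message, ignores wrongly formatted messages, restarts in a fresh round whenever the underlying protocol would output an empty decision, and permanently ignores direct messages from any player once two conflicting signed messages from that player have been collected. The argument then mirrors Lemma~\ref{lem:rel2} and Theorem~\ref{thm:rel2}: re-run the six-case analysis of the Byzantine deviations, so that deviations \ref{dev:3}, \ref{dev:4}, \ref{dev:5} and \ref{dev:6} collapse, via the re-broadcast/ignore/restart mechanics, onto deviation \ref{dev:2}, and deviation \ref{dev:1} (equivocation towards disagreement) is the only genuinely new threat. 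The single structural change relative to Lemma~\ref{lem:rel2}/Theorem~\ref{thm:rel2} is that we no longer assume the absence of crash-baiting; instead we are handed $\vv{\eta}$ (efficient with respect to $\vv{\sigma}$, i.e.\ playing it still implements consensus) and, by hypothesis, an $\vv{\eta'}$ that is an efficient $(k-t,t)$-baiting strategy with respect to $\vv{\sigma'}$, and it is $\vv{\eta'}$ that now plays the role the ``no baiting'' assumption played there.

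First I would check that $\vv{\sigma'}$ still implements consensus. Validity is inherited verbatim. For agreement, any split of non-deviating players into two distinct decisions needs some player to sign two conflicting messages; since correct players re-broadcast every signed message they deliver and the \scheduler must eventually deliver all messages, every correct player eventually holds both conflicting signatures, excludes the culprit, and (through the restart rule) re-enters a round in which the offending players contribute nothing, so the effective configuration is one that $\vv{\sigma}$ run with its baiters playing $\vv{\eta'}$ already decides consistently, and efficiency of $\vv{\eta'}$ yields termination of that round. Since at most $k-t+t=k$ players can misbehave, and $\vv{\sigma}$ tolerates $t$ of them crashing while the remainder get excluded after one equivocation, only finitely many restarts occur and $\vv{\sigma'}$ terminates. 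This is the same bookkeeping as in Lemma~\ref{lem:rel2}, with $\vv{\eta'}$ substituted for the non-baiting hypothesis.

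Second, for $\epsilon$-$(k-t,t)$-robustness (which needs $k\ge t$), fix a set $K'$ of $k-t$ rational players, a disjoint set $T'$ of at most $t$ Byzantine players, an arbitrary $\vv{\tau}_{T'}$, an arbitrary $\vv{\phi}_{K'}$, and any scheduler strategy $\sigma_s$. Applying the transformation above, replace $\vv{\tau}_{T'}$ by the crash-like behaviour it is reduced to, so that to the non-deviating players the run looks like one generated by $\vv{\sigma}$ with at most $t$ crashers and a rational coalition of size $\le k-t\le k$. If $k-t$ rational players do not suffice to force a disagreement in $\vv{\sigma}$, then $(k,t)$-crash-robustness of $\vv{\sigma}$ (which holds precisely because of $\vv{\eta}$) already pins the utility of some coalition member. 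If $k-t$ rational players do suffice, the bound is supplied instead by the punishment clause of $\vv{\eta'}$: the baiters $P$ (rational and correct players, $|P|>t$, $|P\cap K'|\ge m$) detect the equivocation from the collected signatures and play $\vv{\eta'}$, which is a $(k-t-m,t)$-punishment strategy with respect to $\vv{\sigma'}$ and therefore strictly lowers the disagreeing deviants' utility below that of $\vv{\sigma'}$. Collecting the cases — disagreement not achievable, disagreement achievable but punished, or degenerate deviations reduced to crashes — gives $u_i(\vv{\sigma'}_{-T'},\vv{\tau}_{T'},\sigma_s)\ge u_i(\vv{\sigma'}_{N-(K'\cup T')},\vv{\phi}_{K'},\vv{\tau}_{T'},\sigma_s)-\epsilon$ for some $i\in K'$, which is the claim.

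I expect the main obstacle to be this third step, and within it the faithful transfer of the baiting guarantee from the crash world to the Byzantine world. In the crash model a would-be baiter simply observes an honest-but-crashed player emitting inconsistent messages, whereas a Byzantine equivocator actively hides; one must argue that the re-broadcast mechanism of the BFT-extension makes the two situations observationally equivalent for the baiters, that a Byzantine member of $T'$ colluding with $K'$ cannot prevent the baiters from assembling the two conflicting signatures (unforgeability plus universal re-broadcast), and that the reward received under $\vv{\eta'}$ is consistent with $\vv{\eta'}$ being efficient, so that triggering the bait does not itself break termination or agreement. One also has to keep in mind that $K'$ may contain the baiters $P\cap K'$, so the robustness inequality is only asserted ``for some $i\in K'$'' — it is the non-baiting members whose utility is bounded — which is exactly what the $\epsilon$-$(k-t,t)$-robust equilibrium definition demands.
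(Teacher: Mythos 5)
Your proposal is correct and follows essentially the same route as the paper: the paper's own proof is a single sentence stating that the argument is analogous to Lemma~\ref{lem:rel2}, with the sole addition that whenever the crash-baiting strategy $\vv{\eta}$ would be invoked in $\vv{\sigma}$, the strategy $\vv{\eta'}$ is invoked in $\vv{\sigma'}$, and its efficiency guarantees consensus is still implemented. Your write-up simply fills in the six-case deviation analysis and the robustness bookkeeping that the paper leaves implicit, and correctly identifies $\vv{\eta'}$ as the replacement for the ``no crash-baiting'' hypothesis of Theorem~\ref{thm:rel2}.
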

  \begin{proof}
    The proof is analogous to that of Lemma~\ref{lem:rel2}, with the addition that since $\vv{\eta}'$ is an efficient $(k-t,t)$-baiting strategy with respect to $\vv{\sigma'}$, in every scenario where $\vv{\eta}$ is played in $\vv{\sigma}$, then $\vv{\eta'}$ is also played in $\vv{\sigma'}$, and since $\vv{\eta'}$ is efficient, it implements consensus.
  \end{proof}

  We make use of cryptography in Lemma~\ref{lem:rel2} in order to
offer a constructive proof that is useful for both
Theorem~\ref{thm:rel2} and Theorem~\ref{thm2:rel2}. Notice however
that it is trivial from Theorem~\ref{thm:imp1} that $k+2t<n$ and thus
$\min(k,t)<n/3$. The state of the art has already shown protocols that
are $s$-immune for $s<n/3$.



  Again, notice that the results from Theorems~\ref{thm:rel2} and~\ref{thm2:rel2} assume $k\geq t$, if instead $t \geq k$, then we obtain the result from theorems~\ref{thm4:rel2}, for which we reuse the definition of $\epsilon$-$(t,t')$-immunity from Section~\ref{sec:crashbyz1}. 
  \begin{thm}
    \label{thm4:rel2}
    Let $\vv{\sigma}$ be an $\epsilon$-$(k,t)$-crash-robust protocol that implements consensus without a $(k,t)$-crash-baiting strategy with respect to $\vv{\sigma}$, where $t\geq k$. Then, assuming cryptography and a public-key infrastructure scheme, there is an $\epsilon$-$(t-k,k)$-immune protocol $\vv{\sigma}'$ that implements consensus.
  \end{thm}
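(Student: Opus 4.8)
The plan is to reproduce the construction used in the proof of Lemma~\ref{lem:rel2}, adjusting the bookkeeping to the regime $t\geq k$. Starting from the $\epsilon$-$(k,t)$-crash-robust protocol $\vv{\sigma}$, I would define $\vv{\sigma}'$ as its BFT-extension exactly as there: every player rebroadcasts each newly delivered signed message; correct players discard wrongly formatted messages; a correct player that has collected two conflicting signed messages from some player $j$ ignores every message it later receives directly from $j$; and whenever $\vv{\sigma}$ would decide an empty value in the absence of agreement, $\vv{\sigma}'$ instead re-runs in a fresh round. These are precisely the modifications that turn "reply only to a subset of correct players'' and "reply malformed messages'' into the same behavior as "stop replying'', and that collapse the forced-non-termination and forced-non-decision deviations into "stop replying'' or "reply only to a subset'' once the equivocating player has been detected and silenced.

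Next I would re-examine, for a coalition of $|T|\le k$ Byzantine players together with a disjoint set $T'$ of $|T'|\le t-k$ crash players, every deviation available to $T\cup T'$ in $\vv{\sigma}'$. For agreement, the worst case is the $k$ Byzantine players equivocating while the $t-k$ crash players crash after sending only to a favorable subset; after the extension their effective power as seen by correct players is no greater than that of $k$ rational players playing the disagreeing strategy together with $t-k\le t$ crash players in $\vv{\sigma}$, which cannot force a disagreement because $\vv{\sigma}$ is $(k,t)$-crash-robust and---this is where the hypothesis is used---does not rely on a $(k,t)$-crash-baiting strategy, so "the $k$ rational players cannot force disagreement'' is a genuine property of $\vv{\sigma}$ rather than an artifact of baiting. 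For termination, once the (at most $k$) equivocators are permanently ignored by every correct player, the correct players' view is that of an execution of $\vv{\sigma}$ with at most $k+(t-k)=t$ effective crash faults, which terminates by the $\epsilon$-$t$-crash-immunity implied by $(k,t)$-crash-robustness, modulo the $\epsilon$ probability of a signature forgery; validity is inherited since $\vv{\sigma}'$ only ever decides values that $\vv{\sigma}$ could decide. Hence every deviation of $k$ Byzantine plus $t-k$ crash players in $\vv{\sigma}'$ reduces to one that $\vv{\sigma}$ already tolerates, so $\vv{\sigma}'$ is $\epsilon$-$(t-k,k)$-immune and implements consensus.

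The step I expect to be the main obstacle is the termination argument in the presence of the re-run-on-empty-decision loop: one must show that the loop is finite, i.e., that at most $k$ distinct players can ever be caught equivocating, that each such player is caught after finitely many rounds because the rebroadcast rule forces every signed message out to all correct players, and that once all of them are silenced the residual execution is an execution of $\vv{\sigma}$ with no more than $t$ effective crash faults and therefore terminates. A secondary point to get right is the budget bookkeeping that makes the bound $(t-k,k)$ rather than $k$-immune: the $k$ Byzantine players are charged against $\vv{\sigma}$'s "rational'' coordinate in the agreement analysis (which then uses $k$ rational and only $t-k$ crash players) and against its "crash'' coordinate in the termination analysis (which uses $t$ crash players and no rational player), and because $t\ge k$ neither charge exceeds $\vv{\sigma}$'s $(k,t)$ tolerance. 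If instead $\vv{\sigma}$ relied on an effective crash-baiting strategy that were also an effective $(t-k,k)$-baiting strategy on $\vv{\sigma}'$, the variant argument of Theorem~\ref{thm2:rel2} would replace the baiting-exclusion step.
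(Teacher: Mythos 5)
Your proposal is correct and follows essentially the same route as the paper, which gives no separate argument for this theorem and simply declares it analogous to the proof of Lemma~\ref{lem:rel2}; your write-up is a faithful instantiation of that construction, and the budget bookkeeping you make explicit (charging the $k$ Byzantine players to the rational coordinate of $\vv{\sigma}$ for agreement and to the crash coordinate for termination, each within the $(k,t)$ tolerance since $t\geq k$) is exactly the adjustment the paper leaves implicit.
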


\section{Conclusion \& Future Work}
\label{sec:con}
In this paper, 
we showed two interesting relations between types of faults in the solvability of the rational agreement.
First, if a consensus protocol is $\epsilon$-$(k,t)$-robust then it is also $\epsilon$-$(k+t,t)$-crash-robust, meaning that in the absence of irrational (Byzantine)
players, one can tolerate coalitions with $t$ additional rational players. 
Second, with cryptography but in the absence of a baiting strategy then we can devise a $\epsilon$-$(k-t,t)$-robust consensus protocol from 
a $\epsilon$-$(k,t)$-crash-robust consensus protocol.
%
We also prove that if a protocol is $\epsilon$-$(t',t)$-immune, then it is also $\epsilon$-$(t,t'+t)$-crash-robust, and that if a protocol $\vv{\sigma}$ is $\epsilon$-$(k,t)$-crash-robust, where $t\geq k$, then there is an $\epsilon$-$(t-k,k)$-immune protocol $\vv{\sigma}'$ that implements consensus, excluding baiting strategies. We can conclude, 
thanks to the results here outlined, that the Huntsman protocol~\cite{ranchal2021huntsman} yields the greatest crash-robustness to date under this model, as it is $(k+t,t)$-crash-robust for $n>\max(\frac{3}{2}k+3t,2(k+t))$.

Future work includes exploring values of robustness and crash-robustness in variations of this model, such as different assumptions on the communication network, and including the property of fairness for fair consensus. 

\bibliography{game-theory.bib}
\end{document}
